\newtheorem{theorem}{Theorem}[section]
\newtheorem{lemma}[theorem]{Lemma}
\title{Linked Array Tree: A Constant-Time Search Structure for Big Data}
\author{
 Songpeng Liu \\
  College of Systems and Society\\
  Australian National University\\
  \texttt{Songpeng.liu@outlook.com} \\
  %% examples of more authors
  %% \AND
  %% Coauthor \\
  %% Affiliation \\
  %% Address \\
  %% \texttt{email} \\
  %% \And
  %% Coauthor \\
  %% Affiliation \\
  %% Address \\
  %% \texttt{email} \\
  %% \And
  %% Coauthor \\
  %% Affiliation \\
  %% Address \\
  %% \texttt{email} \\
}
\begin{document}
\maketitle

\begin{abstract}
As data volumes continue to grow rapidly, traditional search algorithms, like the red-black tree and B+ Tree, face increasing challenges in performance, especially in big data scenarios with intensive storage access. This paper presents the Linked Array Tree (LAT), a novel data structure designed to achieve constant-time complexity for search, insertion, and deletion operations. LAT leverages a sparse, non-moving hierarchical layout that enables direct access paths without requiring rebalancing or data movement. Its low memory overhead and avoidance of pointer-heavy structures make it well-suited for large-scale and intensive workloads. While not specifically tested under parallel or concurrent conditions, the structure’s static layout and non-interfering operations suggest potential advantages in such environments.

This paper first introduces the structure and algorithms of LAT, followed by a detailed analysis of its time complexity in search, insertion, and deletion operations. Finally, it presents experimental results across both data-intensive and sparse usage scenarios to evaluate LAT’s practical performance.
\end{abstract}

\section{Introduction}
Search algorithms and relevant data structures are an essential foundation in all aspects of computer science \cite{1,2,3}. As they play a necessary role in building scalable, reliable, stable systems and software, from the 1950s, many high-performance search algorithms have been invented, which possess diverse virtues to satisfy different criteria. Generally, the goals of designing search algorithms and data structures are to minimize the time and space complexity and to maximize the implementation simplicity.

Traditional basic search algorithms work well when the dataset is relatively small. However, as the data size grows so fast and usage scenarios become much more complex in modern applications, several major problems emerged in big data.

This paper will first identify some problems in traditional basic search algorithms. Then we will review some widely distributed basic search algorithms. In the main sections, the paper will introduce a novel basic search algorithm and its optimization, and provide the analysis. In the end, the paper will present a performance comparison test based on in-memory search scenarios between the new search algorithm and the modern implementations of B+-tree and red-black tree.

\subsection{Problems when Searching in Big Data}
The first problem in big data search is the low storage access speed. In many environments, the big data volume is so massive that the memory cannot store all the information, so the information needs to be stored on and read from disks. Nevertheless, the storage access speed is extremely slow and has a high latency compared with the high-speed CPU calculations. According to the research \cite{5}, most latency in search algorithms is caused by storage access. If the data is stored on disk, the latency is caused by disk access; if the data is stored in memory, the memory access will cause the most latency. In big data scope, the storage access times will become much more in traditional search algorithms, and the performance of searching will be impacted enormously. There are some approaches to mitigate this problem, like prefetching, but the more foundational solution is to reduce the storage access times. 
	
The second problem is the high overhead. For example, in a red-black tree, every element would contain 3 points besides the data itself. When the data volume increases, the overhead will rise linearly. Therefore, we desire a search algorithm whose overhead is much less.
	
The third problem is the difficulty in supporting parallelization and concurrent environments. To speed up the search, scholars have been researching search algorithms in concurrent environments since the 1970s. The main challenge is caused by the moving elements in inserting and deleting operations. If two processes operate the same part at the same time, it may cause the cache coherence problem. The main approach is to add locks when operating the sensitive part of the structure. Unfortunately, locks will impact the performance. Generally, the higher the element moving frequency, the higher the utilization of locks, and the higher the impact on the performance.

\subsection{Related Work}
In general solutions, logarithmic time complexity in searching has been achieved by tree search algorithms: AVL tree \cite{6}, red-black tree \cite{7}, B-tree and its variations \cite{8,9}, and index search algorithm: skip list \cite{10}.

In the early era of computer science history, a series of basic data structures were invented, including the array and linked list. However, none of them can offer virtues that can tackle large-scale and complex problems. Arrays need a continuous chunk of memory, while linked lists only offer a linear time complexity. Hence, in 1960, P. F. Windley proposed a new data structure called a tree, which can reduce the search time complexity to logarithmic \cite{11}.

However, the original tree structure does not introduce any balancing mechanism. So, in the worst case, the tree could be like a linked list. In 1962, G. M. Adel’son-Vel’skii and Y. M. Landis invented the earliest self-balancing binary search tree data structure, which now is called the AVL tree. In the original algorithm, in each insertion and deletion, after the node is added or removed, all path depths then would be calculated. If the difference between the path depths is more than 1, this means the tree is not balanced, and the tree needs to be rotated.

However, even though the AVL tree can reach a logarithmic search complexity, its balance mechanism needs information from other paths to calculate the depth difference. In 1978, LJ Guibas and R Sedgewick introduced a dichromatic approach, where the colors of nodes indicate the balance properties, to address this challenge, and the data structure is the red-black tree, the most widely distributed binary search tree algorithm nowadays.

When the data set is big, even if the structure is a perfectly balanced binary search tree, it still needs to travel through many different nodes and access the memory many times. When searching on disks, this problem becomes prominent. B-tree and its variants, invented by R. Bayer and E. McCreight in 1970, solved part of the problem by adopting a block load mechanism. Because of their lower loading times, they are ubiquitously employed in modern file systems and database systems. Even though the B+-tree can reduce the disk access times, its memory access times are still high because it needs to read every node in the block to compare the key. In the worst case, it need to traverse all nodes in all blocks on the path, e.g., retrieving the latest element.

Besides these most used basic search algorithms, many other profound search algorithms and data structures have been invented \cite{14,15,16}. However, after we entered the 21st century, the general research interests shifted from designing basic search algorithms to innovatively applying them, such as using CUDA to implement B+-tree \cite{4}, or to improve the performance of concurrent B+-tree \cite{12}.

As we can see, the traditional search algorithms are based on value comparison, comparing the value and traveling to the next node, and repeating the process. Mathematics in computer science is discrete. When dealing with discrete data, the search could be based on value calculation instead of value comparison to speed up the process and reduce the storage access times.To the best of our knowledge, LAT is the first search algorithm that achieves constant-time access with low memory overhead in practical big data settings

\section{Linked Array Tree}
This section will introduce a new search algorithm, called Linked Array Tree (LAT). It has a similar structure to B+-tree, but the management process and the search concept have huge discrepancies.

\textbf{Terms:} 
\begin{itemize}
    \item \textbf{Key}: The unique identifier assigned to an item.
    \item \textbf{Value}: The data of the item, each value should be paired with a key.
    \item \textbf{Node}: The basic unit that stores a pointer or a value.
    \item \textbf{Height}: The amount of hierarchy layers in the structure.
    \item \textbf{Level $x$}: Hierarchy layer sequence, the top level is level $0$; the last level is level $height-1$.
    \item \textbf{Data Level}: The last level which is used to store value.
    \item \textbf{Index Level}: All levels other than the data level are index levels.
    \item \textbf{Data Array}: Arrays in the data level that have $radix$ nodes and a pointer, and the nodes store values.
    \item \textbf{Index Array}: Arrays in index level that have $radix$ nodes, and the nodes store pointers.
    \item \textbf{Radix}: The number of nodes in an array, should be bigger than 1.
    \item \textbf{Max size}: The maximum quantity of the item that the present structure could store.
    \item \textbf{Remainder $x$}: A number that is used to locate the node in the array in level $x$.
\end{itemize}

The LAT has a similar multi-level pyramid structure to the B+-tree. The $values$ are orderly stored at the $data level$. But in $index levels$, the index arrays only store pointers to the next level. Each data array contains an extra pointer pointing to the next data array besides the $values$. The $key$ is not been explicitly stored in the structure. When initializing an LAT, a $radix$ and $height$ should be set. Figure 1 presents an example of a LAT whose $radix$ is 4 and $height$ is 4. When the $radix$ is 2, LAT could be seen as a perfectly balanced binary search tree.

\begin{figure}[htbp]
\centerline{\includegraphics[width=0.8\linewidth]{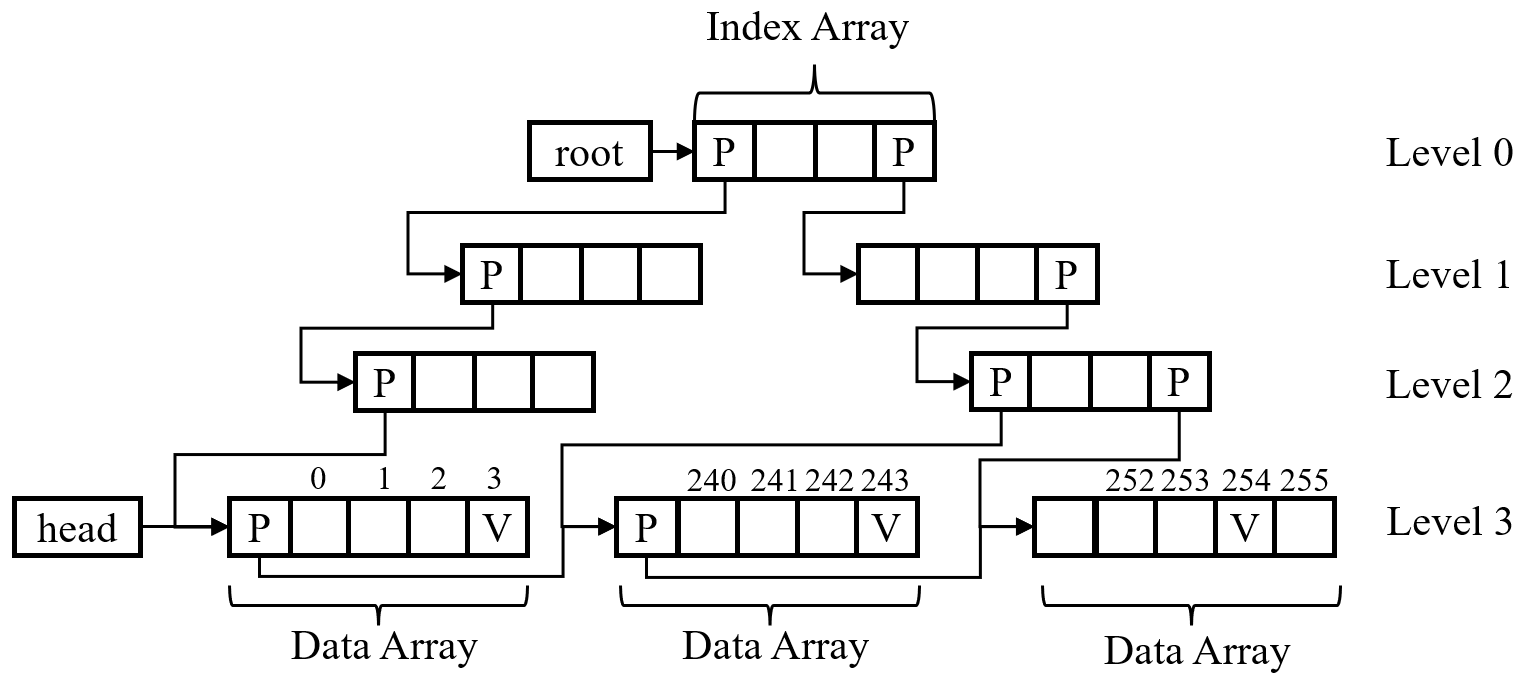}}
\caption{A Linked Array Tree whose $radix$ is 4 and $height$ is 4.}
\end{figure}

\begin{lemma}
Let $r$ be the $radix$, $h$ be the $height$, $m$ be the max size of the LAT, then $m=r^h$.
\end{lemma}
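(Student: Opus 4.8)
The plan is to prove $m = r^h$ by induction on the height $h$, exploiting the self-similar, recursive structure of the LAT. The base case is $h = 1$: a LAT of height $1$ consists of a single data array, which by definition holds exactly $r$ (that is, $radix$) values, so its max size is $r = r^1$. For the inductive step, I would observe that a LAT of height $h+1$ has an index array at level $0$ with $r$ nodes, each of which points to the root of an independent sub-LAT of height $h$. Since those $r$ subtrees partition the storage capacity and do not overlap, the max size of the height-$(h+1)$ structure is $r$ times the max size of a height-$h$ LAT, which by the inductive hypothesis is $r \cdot r^h = r^{h+1}$. This closes the induction and gives $m = r^h$ for all $h \ge 1$.

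An alternative, perhaps cleaner, route is a direct counting argument: I would argue that every value sits in a unique leaf node, and leaf nodes are in bijection with root-to-leaf paths through the tree. At each of the $h$ levels one descends through exactly one of $r$ children (or, at the data level, selects one of $r$ value slots), so the number of such paths is the product $\underbrace{r \times r \times \cdots \times r}_{h} = r^h$. Making this bijection precise requires noting that the structure is a complete $r$-ary tree of depth $h$ — which is exactly what the ``sparse, non-moving hierarchical layout'' with fixed $radix$ and fixed $height$ guarantees — so no leaf is missing and no slot is shared.

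The main obstacle is not the arithmetic (which is routine once the recursion is set up) but pinning down the precise combinatorial meaning of ``max size'' from the informal description. I need to be confident that (i) at the data level each data array genuinely contributes $r$ value-holding nodes and not $r-1$ or $r+1$ — the extra ``next data array'' pointer must be counted as overhead, not as a value slot — and (ii) the index levels contribute no capacity of their own but merely multiply the branching by a factor of $r$ per level. Both points follow from the \textbf{Terms} definitions (``Data Array: arrays in the data level that have $radix$ nodes and a pointer, and the nodes store values''; ``Index Array: arrays in index level that have $radix$ nodes''), so the proof reduces to carefully translating those definitions into the induction above; I would state those two facts explicitly as the first step before invoking induction.
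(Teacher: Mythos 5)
Your proof is correct and takes essentially the same approach as the paper: both arguments are a straightforward induction unrolling the factor-of-$r$ branching per level (the paper phrases it as $l_n = l_{n-1}\times r$ with $l_0 = r$ over the level index, while you phrase it as a decomposition into $r$ sub-LATs of height $h$, which is the same recursion read the other way). Your attention to counting the data array's extra pointer as overhead rather than a value slot is a valid point the paper glosses over, but it does not change the argument.
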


\begin{proof}
Let $l_{n}$ be the maximum number of nodes that $level\ n$ can store, then $l_{n}=l_{n-1} \times r$ because each pointer points to a new array which can store $r$ nodes. Since $l_{0}=r$, according to the mathematics induction, $l_{n}=r^{n+1}$. Since the data level is at $level\ h-1$, the max size of the structure is $m=l_{h-1}=r^h$ 
\end{proof}

There is an approach to make the max size and structure dynamic. It is simply to add another level on top, and the max size would be multiplied by $radix$. Figure 2 illustrates the process.
\begin{figure}[htbp]
\centerline{\includegraphics[width=\linewidth]{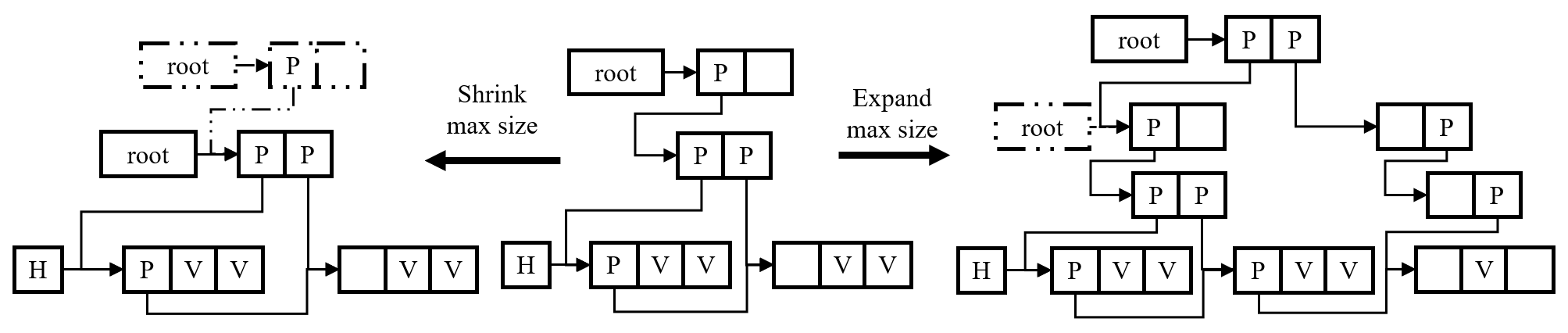}}
\caption{Changing the max size of an LAT.}
\end{figure}

\subsection{Search}
When searching for a value, after the key is inputted, a series of remainders would be calculated out based on the $radix$ and $height$. Algorithm 1 records the process. The recommended structure used to store remainders is a stack since the remainders that are calculated out earlier will be used later.

\begin{figure*}[htbp]
\centerline{\includegraphics[width=\linewidth]{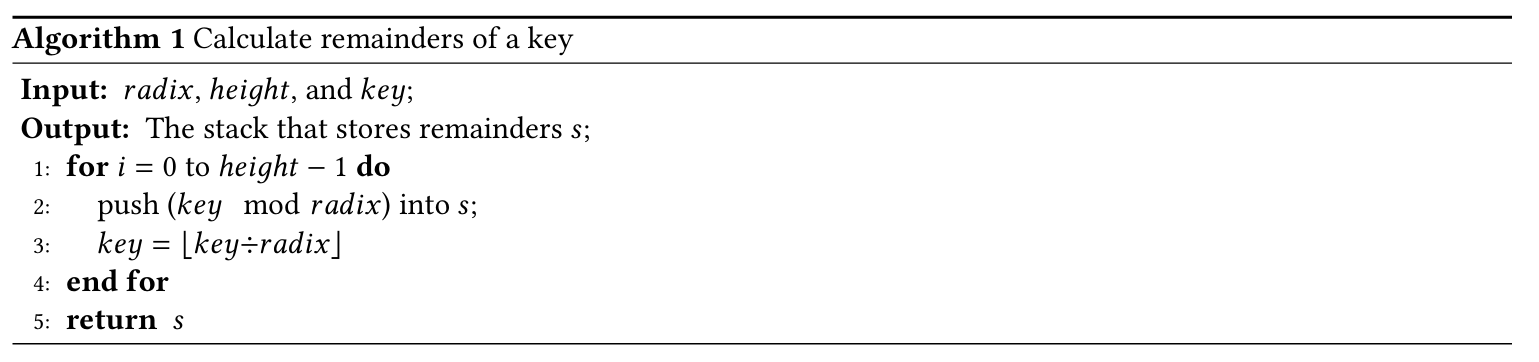}}
\end{figure*}

For example, in Figure 1, the remainders of $key\ 254$ should be: $3,3,3,2$. The remainders are used to locate the search path directly. A remainder is an index to access the pointer in the index array to travel to the next level. For example, we firstly pop 3 out from the stack as $remainder\ 0$, we directly access the fourth pointer in the $level 0$. We repeat the process until we reach the data level. An empty node means the value of the key does not exist.

\subsection{Insertion}
If the data array already exists, we can simply travel to the location via the search process and set the $value$ in it. If we encounter an empty node, a new array needs to be allocated. If the $data array$ is newly allocated, it needs to be inserted into the array list. This needs us to find the left data array of it. To ensure efficiency, we can not search it from the head. We can locate the left array with a recursive search from the root. Figure 3 shows the search path of a newly allocated data array, and Algorithm 2 records the process.
\begin{figure}[htbp]
\centerline{\includegraphics[width=0.6\linewidth]{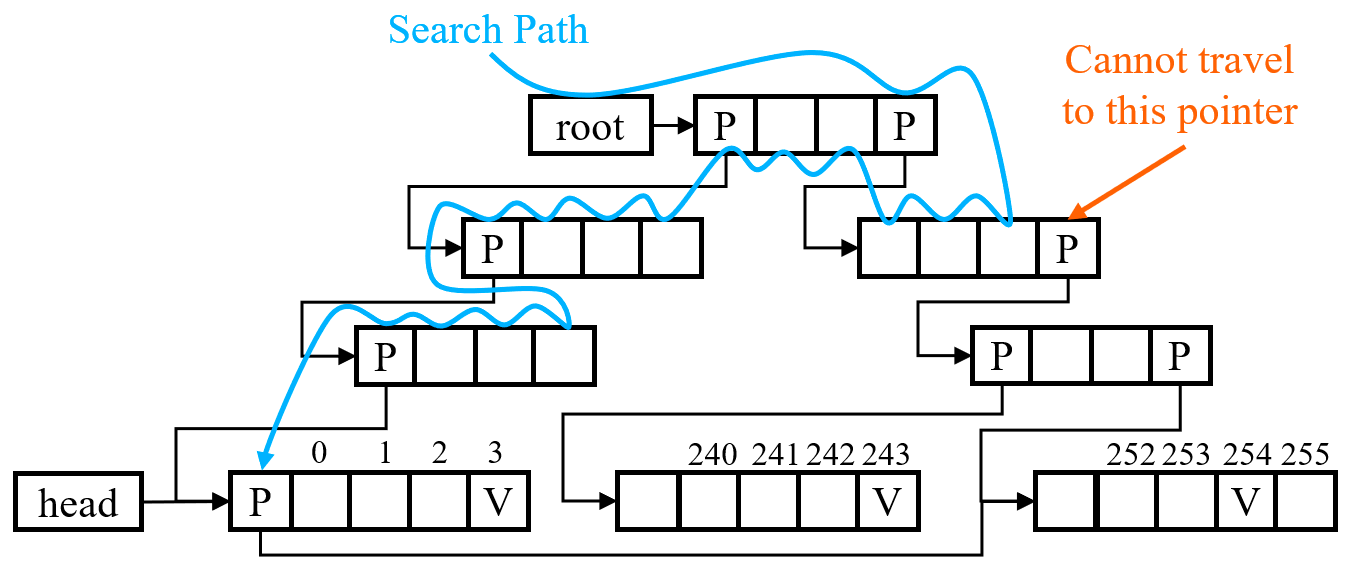}}
\caption{Finding the left data array.}
\end{figure}

\begin{figure*}[htbp]
\centerline{\includegraphics[width=\linewidth]{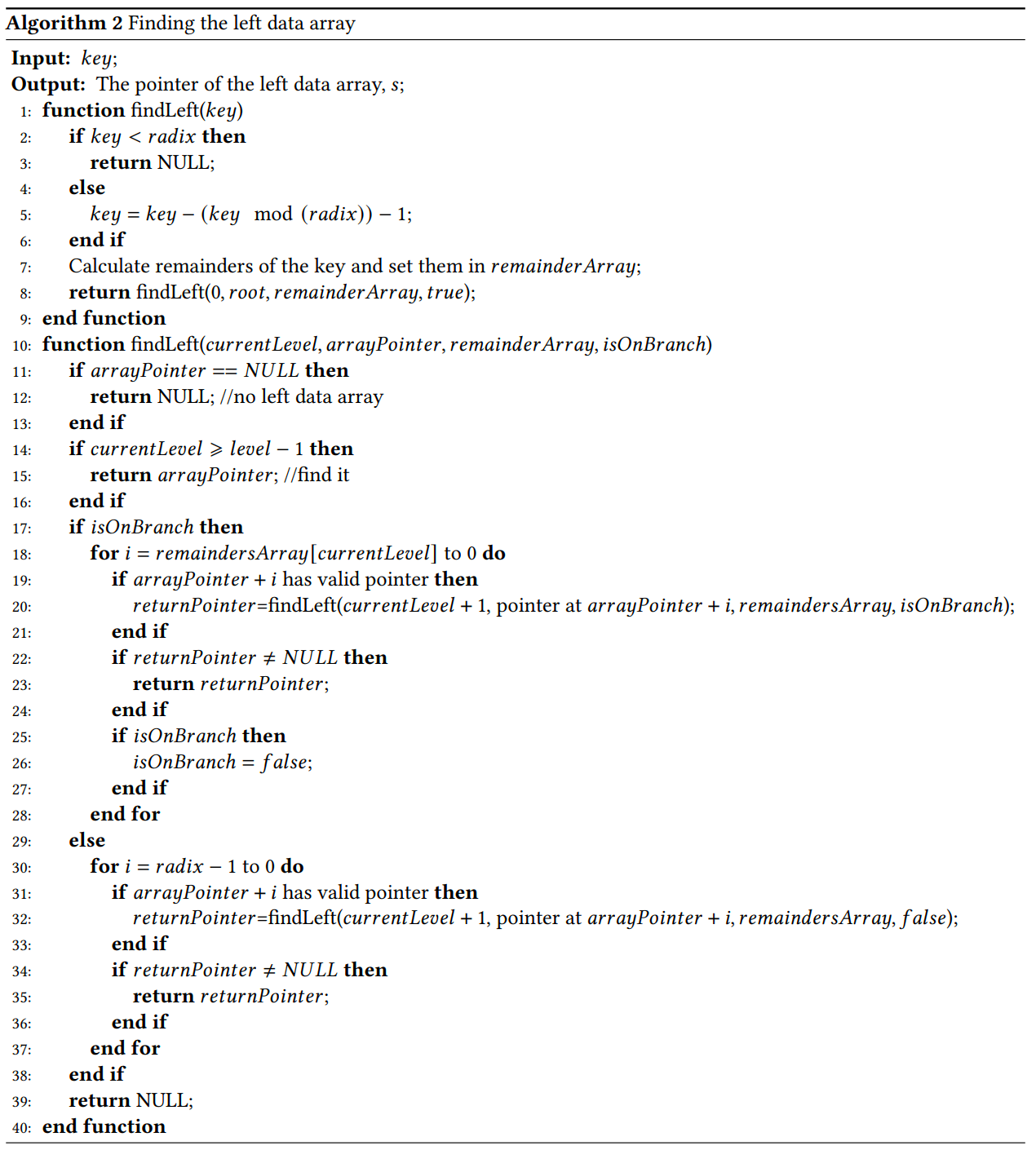}}
\end{figure*}

Several problems need to be solved when finding the left data array. The first problem is to prevent the process from accessing the newly allocated array itself. This challenge is solved by a key conversion in Algorithm 2. Through $key = key- (key\mod(radix)) -1$, we can ensure the process starts the finding from left, instead from the newly allocated array itself. 

There is still another problem: whether the present array is on the search branch? If the array is on the search branch, it may only need to iterate over a part of the pointers in it. When the process travels to a new level, it needs information from the above level to judge. In Algorithm 2, through the boolean parameter $isOnBranch$, we solved the problem, because when $isOnBranch$ is true, only the first iterated array in the next level is on the branch. Once the search turns back and goes up, all other arrays are not on the search branch but at the left of the branch.

The $currentLevel$ can indicate which level the operating array is at. It is used to get the right remainder from the $remaindersArray$. If the array is on the branch, the $remainder$ could determine which pointers it needs to start searching.

\subsection{Deletion}
When deleting a value, after the value is found, it will be deleted. After it is deleted, the process needs to check the whole array to see if it can be freed. If the data array does not contain other values and could be freed, the process will find the left data array and free it from the data array list. If you adopt a bidirectional link implementation at the data level, the left data array can be accessed directly; otherwise, you need to use the mechanism in the insertion to find it. If we freed the array, we need to check the index array in the above level to see if I can continuously free that array.

Compared with B+-tree and binary search trees, LAT does not require moving elements and has fewer operations. In the B+-tree, the insertion and deletion may invoke the complex steps of splitting and merging, and the binary search tree needs to rotate to keep the balance. The B+-tree needs to load an entire block in each level, while the LAT only needs to read one node in each level. LAT not only offers a direct access approach to locate the value, but also provides a simpler implementation.

\subsection{Optimization}
We already demonstrated the initial LAT, which can locate the element directly. However, there is still a big problem. In modern environments, the length of a key could be 64 bits or even more. Involving such a large number in division to calculate the remainders is not efficient. Hence, to achieve a higher performance, we need to optimize the remainder calculation process. Fortunately, there is a way to totally avoid it, and the core is picking a suitable $radix$ and $height$.

\begin{lemma}
Let $n$ be the $key$'s length, $radix$ $r$ be a power of 2, $r=2^p$, and $h$ be the $height$. If $n=p \times h$, then the data from $i$ bit to $i+p-1$ bit in the key is the $remainder\ \frac{i}{p}$, $(\frac{i}{p} \in \mathbb{Z}_{\geq 0})$.
\end{lemma}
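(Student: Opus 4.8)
The plan is to unwind the recursive definition of the remainders and show that, under the stated arithmetic relationship $n = p\times h$ with $r = 2^{p}$, each division-and-remainder step in Algorithm 1 coincides exactly with extracting a $p$-bit field of the key. First I would fix notation: write the key as an $n$-bit integer, and let the remainder-generation process be $q_{0}=\mathrm{key}$, $\mathit{remainder}\ j = q_{j}\bmod r$, $q_{j+1}=\lfloor q_{j}/r\rfloor$, for $j = 0,1,\dots,h-1$. (This is precisely the base-$r$ digit expansion of the key, read from least significant digit upward; I would state that correspondence explicitly, since the whole lemma is really the observation that base-$2^{p}$ digits are $p$-bit blocks.)

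The core step is an induction on $j$ establishing that $q_{j} = \lfloor \mathrm{key}/2^{jp}\rfloor$ and hence that $\mathit{remainder}\ j = \lfloor \mathrm{key}/2^{jp}\rfloor \bmod 2^{p}$. The base case $j=0$ is immediate. For the inductive step I would use $r = 2^{p}$ to rewrite $q_{j+1}=\lfloor q_{j}/2^{p}\rfloor = \lfloor \lfloor \mathrm{key}/2^{jp}\rfloor / 2^{p}\rfloor = \lfloor \mathrm{key}/2^{(j+1)p}\rfloor$, invoking the standard identity $\lfloor\lfloor a/b\rfloor/c\rfloor = \lfloor a/(bc)\rfloor$ for positive integers. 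Then I would identify $\lfloor \mathrm{key}/2^{i}\rfloor \bmod 2^{p}$ (with $i = jp$) as exactly the integer formed by bits $i$ through $i+p-1$ of the key: dividing by $2^{i}$ discards the low $i$ bits, and taking the result mod $2^{p}$ keeps only the next $p$ bits. Setting $j = i/p$ gives the statement as phrased, and the hypothesis $n = p\times h$ guarantees that the $h$ remainders $j = 0,\dots,h-1$ partition all $n$ bits with none left over, so the indexing $i/p \in \mathbb{Z}_{\ge 0}$ ranges correctly and every bit of the key is consumed.

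I do not expect a genuine obstacle here — the result is essentially a restatement of positional notation in radix $2^{p}$ — so the "hard part" is mostly a matter of care: being explicit about the bit-indexing convention (which end is bit $0$), checking the floor-division identity is applied with integer arguments only, and confirming the boundary count via $n = p h$ so the claim about $\mathit{remainder}\ i/p$ is well-posed for exactly the indices that occur. If the paper's Algorithm 1 generates remainders most-significant-digit first rather than least-significant-digit first, I would additionally note the reversal (bit block $[i, i+p-1]$ versus $[n-i-p, n-i-1]$), but given the phrasing of the lemma the least-significant-first reading is the intended one and needs no adjustment.
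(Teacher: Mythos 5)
Your core argument is correct and is essentially the paper's argument: both proofs amount to the observation that repeated division-with-remainder by $r=2^{p}$ produces the base-$2^{p}$ digits of the key, and that each such digit is a contiguous $p$-bit field. Your version is in fact tidier, since the explicit induction $q_{j}=\lfloor \mathrm{key}/2^{jp}\rfloor$ via the identity $\lfloor\lfloor a/b\rfloor/c\rfloor=\lfloor a/(bc)\rfloor$ packages in one line what the paper does by writing out the bit expansion twice (once for $k \bmod 2^{p}$, once for $\lfloor k/r\rfloor$).

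The one place you guessed wrong is the indexing convention, which you explicitly flagged and then resolved the wrong way. The paper indexes bits from the most significant end ($k_{0}$ is the high bit, ``high bits at left'') and numbers remainders by the level at which they are consumed (remainder $0$ is used at level $0$, the top of the tree), not by the order in which the algorithm produces them. Since the algorithm emits remainders least-significant-digit first and pushes them on a stack, remainder $0$ is the \emph{last} digit computed, i.e.\ the most significant $p$-bit block. So the re-indexing step you proposed to skip is genuinely needed, and the paper spends the final paragraph of its proof on exactly that bookkeeping (``a remainder calculated out in the $i$'th round is used at level $h-i$''). You escape unharmed only by accident: because \emph{both} the bit numbering and the remainder numbering are reversed relative to your least-significant-first reading, the two reversals cancel and the statement ``bits $i$ through $i+p-1$ form remainder $i/p$'' comes out identical under either convention. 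A complete write-up should make that double reversal explicit rather than assert that no adjustment is required.
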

\begin{proof}
Suppose the key $k$ is composited by bits $k_{0}, k_{1}, k_{2}, \cdots, k_{n-1}$ from left to right, high bits at left. $k=\sum_{i=0}^{n-1} k_{i} \times 2^{n-1-i}$. When $i \leq n-p-1$, $n-1-i\geq p$, so $k_{i} \times 2^{n-1-i}$ is a multiple of $2^p$. Then we have:

\begin{align}
k&=k_{0} \times 2^{n-1} + k_{1} \times 2^{n-2} + \cdots + k_{n-p-1} \times 2^{p} + k_{n-p} \times 2^{p-1} \cdots + k_{n-1} \times 2^{0} \\
& \equiv 0+0+\cdots+0+ k_{n-p} \times 2^{p-1}+\cdots + k_{n-1} \times 2^{0} \pmod{2^{p}} \\
& = k_{i} \times 2^{p-1}+ k_{i+1} \times 2^{p-2}+ \cdots + k_{i+p-1} \times 2^{0} \quad (i=n-p)
\end{align}

Hence, we have provided the rightmost $p$ bits data in $k$ is equal to the remainder that is calculated out in Algorithm 1, line 2. Then we will prove $\lfloor k {\div} r \rfloor$ is equal to $k$ right shifts $p$ bits.

According to the definition of modulo operation, $\lfloor k {\div} r \rfloor = (k- (k \mod r))\div r$. Since we already proved the last $p$ bits in $k$ equal to $k\mod r$, we have:

\begin{align}
\lfloor k {\div} r \rfloor &= (k- (k \mod r))\div r \\
&=(k_{0} \times 2^{n-1} + k_{1} \times 2^{n-2} + \cdots + k_{i-1} \times 2^{n-i} + k_{i} \times 2^{n-1-i}) \div 2^p \quad (i=n-p-1) \\
&=(k_{0} \times 2^{n-1} + k_{1} \times 2^{n-2} + \cdots +k_{i-1} \times 2^{p+1} + k_{i} \times 2^{p}) \div 2^p \\
&=k_{0} \times 2^{i} + k_{1} \times 2^{i-1} + \cdots +k_{i-1} \times 2^{1} + k_{i} \times 2^{0}
\end{align}

Hence, $\lfloor k {\div} r \rfloor$ is equal to $k$ shifts $p$ bits right. As $\lfloor k {\div} r \rfloor$ will become the new $k$, the next remainder calculation will get the last $p$ bits from the new $k$. To calculate all remainders out, it needs $\frac{n}{p}$ iterations and $h=\frac{n}{p}$. Because the remainders that are calculated out earlier will be used later, let a remainder be calculated out in $i$'th round, then it will be used at $level\ h-i$, and it is the $remainder\ h-i$. Since The remainder calculation is from behind to the head, the $i$ bit to $i+p$ bit in $k$ will be calculated out in $(h- \frac{i}{p})$'th round $(\frac{i}{p} \in \mathbb{Z}_{\geq 0})$, and it is the $remainder\ \frac{i}{p}$.

Therefore, the lemma has been proved.
\end{proof}

By adopting this concept, we can avoid calculating the remainders in a loop but simply change the viewpoint. Figure 4 provides an example. In this example, the key's $length$ is 8, the $radix$ is 4, and $height$ is 4. So, the initialization process of LAT could be modified to: pick a power of 2 as the $radix$, $2^p$, get the bit length $n$ of the $key$, and set the $height$ to $\lceil \frac{n}{p}\rceil$.
\begin{figure}[htbp]
\centerline{\includegraphics[width=\linewidth]{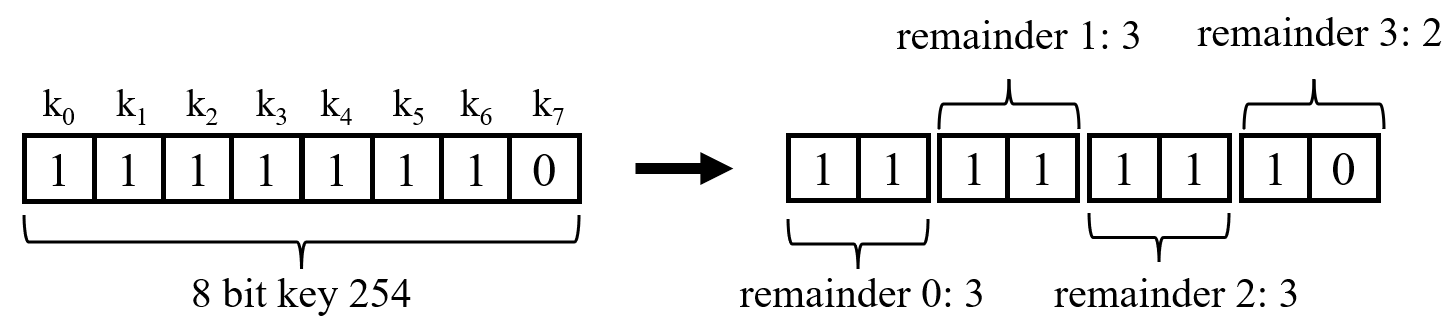}}
\caption{Optimized remainders calculation.}
\end{figure}

\subsection{Key Properties of LAT}

The Linked Array Tree (LAT) presents several structural and computational advantages over traditional search data structures. These properties support its applicability in big data and concurrent environments:

\begin{itemize}
    \item \textbf{Constant-Time Access:} Lookup, insertion, and deletion operations have $O(1)$ complexity, assuming fixed radix and height, regardless of the number of stored elements.
    
    \item \textbf{Non-Moving Updates:} Insertion and deletion do not require moving existing elements, avoiding rotations or rebalancing typically seen in tree structures.
    
    \item \textbf{Low Memory Overhead:} LAT avoids explicit key storage and reduces pointer usage, leading to efficient memory consumption, especially in dense or sequential data scenarios.
    
    \item \textbf{Parallelization-Friendly:} Due to its direct access paths and minimal structural modification, LAT is naturally suited to concurrent and parallel operations.
    
    \item \textbf{Bitwise Optimization:} When radix is a power of two, remainder calculations can be replaced with bit operations, further improving access speed.
    
    \item \textbf{Scalable and Flexible:} The structure supports dynamic growth by increasing the height when needed, enabling it to scale efficiently with key space expansion.
\end{itemize}

\section{Time Complexity Analysis}
\subsection{Search}
When searching for a value, the travel times between nodes are constant. Let $I_{s}$ be the iteration times in search,$h$ be$height$, then $I_{s}=h$. Let $m$ be the $max\ size$, according to Lemma 2.1, $I_{s}=h=log_{r}(m)$.
\begin{theorem}
    Let $T_{s}$ be the search time complexity of a LAT, then$$T_{s}=O(log_{r}(m))$$
\end{theorem}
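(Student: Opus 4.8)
The plan is to charge a constant cost to each step of the search procedure (Algorithm 1) and then combine this with the observation, already recorded just before the theorem, that the number of iterations equals the height. First I would fix the computational model: I work in the standard word-RAM model in which a key occupies $O(1)$ machine words, so that an integer division or modulo by the fixed radix $r$, a pointer dereference, and an array index each cost $O(1)$. Under the bitwise optimization of Lemma 2.2 this is even more transparent, since extracting $remainder\ i$ is a shift-and-mask rather than a division; but even without that optimization the per-remainder cost is constant because $r$ and the key width are fixed quantities of the structure, not functions of $m$.

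Next I would decompose a search into its two phases. In the first phase the algorithm computes the stack of remainders: a loop of $h$ rounds, each performing one modulo and one division (or one shift and one mask), hence $O(h)$ in total. In the second phase the algorithm walks the structure from $level\ 0$ down to the data level: at each of the $h$ levels it pops one remainder off the stack, uses it to index into the current array, and either follows the stored pointer to the next level or, upon reaching the data level, reads the value (or detects an empty node and reports absence). Each level therefore costs $O(1)$, so this phase is also $O(h)$. Summing the two phases gives $T_s = O(h)$, consistent with the identity $I_s = h$ stated immediately above the theorem.

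Finally I would substitute the relation supplied by Lemma 2.1. That lemma gives $m = r^h$, equivalently $h = \log_r(m)$, so $T_s = O(h) = O(\log_r(m))$, which is exactly the claim.

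The only point needing a little care—hardly an obstacle—is justifying that the remainder arithmetic is genuinely $O(1)$ per level rather than implicitly $\Theta(\log m)$; this is precisely where the hypothesis that the radix (and hence the key width) is fixed, together with the word-RAM assumption, is doing the work. If instead one charged $\Theta(\log m)$ bit operations per division, the bound would degrade, so it is worth stating the model explicitly rather than leaving it implicit.
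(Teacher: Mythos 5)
Your proof is correct and follows essentially the same route as the paper: the search performs exactly $h$ iterations (one per level), and Lemma 2.1 gives $h=\log_r(m)$, hence the bound. The paper's own proof is considerably terser --- it simply asserts that the iteration count equals $h$ and that $h$ is constant --- so your explicit accounting for the remainder-computation phase and the word-RAM assumption about constant-cost arithmetic is a welcome tightening of an argument the paper leaves implicit, but it is not a different approach.
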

\begin{proof}
    Since the iteration times in search are equal to $h$, and $h$ is constant in a LAT, the iteration times in search is a constant number. Hence, the search time complexity in LAT is constant and is not impacted by the number of stored elements.
\end{proof}
\subsection{Insertion}
When inserting a value, there are 2 cases. In the complex case, the data array is newly allocated, and we need to find the left data array. Let $I_{f}$ be the iteration times in finding the left data array, then the insertion iteration times $I_{i}$, in this case, is $I_{i}=I_{s}+I_{f}$. Figure 5 shows an example of the worst case and best case in finding the left data array.
\begin{figure}[htbp]
\centerline{\includegraphics[width=\linewidth]{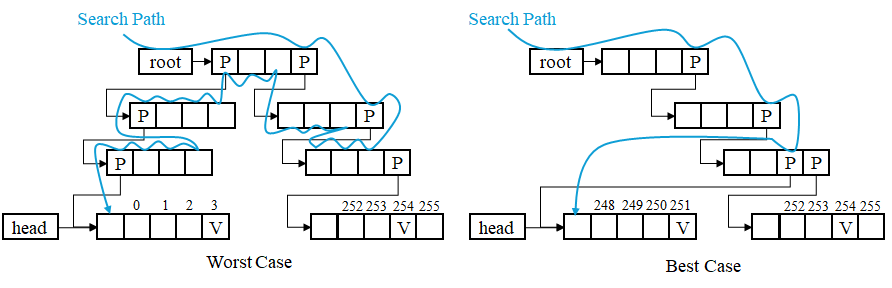}}
\caption{Worst case and best case in finding left data array.}
\end{figure}

\begin{theorem}
    Let $T_{i}$ be the insertion time complexity of a LAT, then$$T_{i}=O(log_{r}(m))$$
\end{theorem}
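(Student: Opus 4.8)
The plan is to reduce the statement to a bound on the iteration count $I_{i}$ of the insertion procedure and then to show that this bound depends only on the fixed parameters $r$ and $h$, so that it equals $O(h)=O(\log_{r}(m))$ by Lemma~2.1. I would follow the case split already set up in the text. In the simple case the target data array on the search path already exists, so insertion is just the search traversal followed by writing a value into one node; hence $I_{i}=I_{s}=h$ and the bound is immediate. The work is all in the complex case, where a new data array is allocated and $I_{i}=I_{s}+I_{f}$, so it suffices to bound $I_{f}$, the number of iterations Algorithm~2 spends locating the left data array.

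First I would make the shape of that recursion precise. After the conversion $key \leftarrow key-(key \bmod r)-1$, Algorithm~2 is a predecessor-style traversal: while $isOnBranch$ is true it descends from the root along the remainder path of the converted key, and as soon as it leaves that path---because the indicated pointer is empty, or because it must step to a strictly smaller index---it performs a single rightmost non-empty descent down to the data level. The $isOnBranch$ flag is what enforces this structure: only the first array entered at a given level can lie on the branch, so once the traversal turns back upward, every array it subsequently visits is strictly left of the branch and it only ever follows that array's largest non-empty pointer.

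Given this structure, the traversal visits $O(h)$ arrays and inspects at most $r$ pointers inside each, so $I_{f}\le c\,r\,h$ for an absolute constant $c$; a tighter accounting along the lines of Figure~5 gives roughly $I_{f}\le 2h$ level-visits in the worst case, with a short descent in the best case. In particular $I_{f}$ is bounded by a function of $r$ and $h$ alone. Combining the cases, $I_{i}\le I_{s}+I_{f}=O(h)$, and since $r$ and $h$ are fixed when the LAT is created, $I_{i}$ does not depend on the number $m$ of stored elements; substituting $h=\log_{r}(m)$ from Lemma~2.1 yields $T_{i}=O(\log_{r}(m))$.

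I expect the main obstacle to be the rigorous justification that Algorithm~2's recursion never re-enters a subtree it has already examined and never backs up more than once per level---that is, that the $isOnBranch$ bookkeeping together with the key conversion really do confine the search to ``one branch-descent plus one rightmost-descent.'' Establishing that requires reasoning about the invariant carried by the boolean parameter and by the $currentLevel$ and $remaindersArray$ indexing across recursive calls, rather than a routine counting argument; once it is in hand, the remaining estimate of $I_{f}$ is pure arithmetic.
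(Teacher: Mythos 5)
Your proposal follows essentially the same route as the paper: split insertion into the existing-array case ($I_i = I_s$) and the new-array case ($I_i = I_s + I_f$), bound $I_f$ by a quantity of the form $c\,r\,h$ for the left-data-array traversal (the paper's explicit worst case is $I_w = (2h-3)\,r$), and conclude that since $r$ and $h$ are fixed the total is constant, i.e.\ $O(\log_r(m))$ by Lemma~2.1. Your bound and case analysis match the paper's, and your closing caveat about rigorously justifying the one-descent-plus-one-rightmost-descent shape of Algorithm~2 is a gap the paper itself also leaves informal.
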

\begin{proof}
    Let $I_{b}$ be the iteration times in finding the left data array in the best case. Let $I_{w}$ be the iteration times in finding the left data array in the worst case.
    $$I_{w}=(2height -3)\times radix$$

    In the worst case, the search path need to search near all nodes in $2(height-2)+1$ index arrays. We notice it would not access the newly allocated array, but it needs to travel to the left data array in the last step. So, the iteration time is equal to the number of all nodes in the index arrays on the path. When the insertion needs to find the left data array, we have:
    \begin{gather}
        I_{b}\leq I_{f}\leq I_{w} \\
        I_{s}+I_{b} \leq I_{i}\leq I_{s}+I_{w} \\
        2\times height \leq I_{i}\leq height + (2height -3)\times radix
    \end{gather}
    
    Since the $height$ and $radix$ are constant in a LAT tree if $max\ size$ does not change, in all cases, the insertion iteration times $I_{i}$ is not bigger than a constant number.
\end{proof}

\subsection{Deletion}
The early part of the deletion process is like in the search, but it needs to check the arrays on the search branch to see if the arrays could be freed. So, let $I_{d}$ be the iteration times in deletion, $I_{c}$ be the iteration times in the clear process. Then, if the deletion needs to free the array, it needs to find the left data array, $I_{d}= I_{s}+I_{f}+I_{c}$.
\begin{theorem}
    Let $T_{d}$ be the deletion time complexity of a LAT, then$$T_{d}=O(log_{r}(m))$$
\end{theorem}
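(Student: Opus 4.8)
The plan is to show that the number of iterations $I_d$ performed by a deletion is bounded by a constant depending only on the radix $r$ and the height $h$, exactly in the spirit of the proofs of Theorems 3.1 and 3.2. Using the decomposition $I_d = I_s + I_f + I_c$ stated above, it suffices to bound the three summands one at a time.

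First I would observe that the locating phase is literally a search, so $I_s = h$ by the discussion preceding Theorem 3.1. Next, for the $I_f$ term: if the emptied data array has to be unlinked from the data array list, the process must find its left data array, but this happens at most once per deletion, and it is precisely the routine whose cost is bounded in the proof of Theorem 3.2, so $I_f \le I_w = (2h-3)\,r$; if no data array is freed then $I_f = 0$, which only helps.

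The work left to account for is $I_c$, the clear/free propagation. After the value is removed, the process scans the $r$ nodes of the current data array to test whether it is now empty; if it is, the array is freed and the process moves one level up to the parent index array, scans its $r$ pointers, and repeats. The key point is that this propagation only ever travels upward along the single branch followed by the search, so it touches at most $h$ arrays in total, each at a cost of at most $r$ node inspections, whence $I_c \le h\,r$. Putting the pieces together, $I_d \le h + (2h-3)\,r + h\,r$, a fixed constant once $r$ and $h$ are fixed; and since $h = \log_r(m)$ by Lemma 2.1, this gives $T_d = O(\log_r(m))$.

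I expect the one genuinely delicate step to be the bound on $I_c$: one must justify that the free-propagation is monotone in level and never branches, so that the number of "whole array" checks is capped by the height of the structure rather than by the number of stored items, and also note that unlinking from the data list is needed only at the data level, so $I_f$ is incurred at most once. Everything else reduces to the same constant-bounding argument already used for search and insertion.
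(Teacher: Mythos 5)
Your proposal is correct and follows essentially the same route as the paper: the same decomposition $I_{d}=I_{s}+I_{f}+I_{c}$, the same reuse of the worst-case bound $I_{w}=(2h-3)r$ from the insertion analysis, and the same observation that the free-propagation climbs one branch through at most $h$ arrays of $r$ nodes each (your bound $I_{c}\leq hr$ is marginally looser than the paper's exact count $hr-r+1$, which changes nothing). The only detail you omit is the paper's side remark about the extra pointer at level $0$, which does not affect the constant upper bound.
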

\begin{proof}
    Let us consider the worst case. Let $r$ be the $radix$, $h$ be the $height$. In the worst case, $I_{f}=I_{w}=r\times(2h -3)$. Because there is no other value or pointer in the arrays on the branch in the levels that are bigger than $level\ 0$, the check process needs to check all nodes, then in the worst case in Figure 5  $$I_{c}=h \times r -r +1$$

    Mind that in $level\ 0$, there is another pointer. If there is no other pointer, the finding left process would be much shorter. In all cases, the iteration times in $level\ 0$ in finding the left process plus in the clear process is not bigger than $r+1$. In the worst case, $$I_{d}= I_{s}+I_{w}+I_{c}= h+ (2h -3)\times r +h \times r -r +1 =3hr-2r+h+1$$

    Then, in all cases:$$I_{d}\leq 3hr-2r+h+1$$
    Since $r$ and $h$ are constant in a LAT, all deletion iteration times are not bigger than a constant number.
\end{proof}

\section{Performance Comparison}
In this section, we will present a performance comparison between the new structure, the red-black tree, and the B+ tree, since the red-black tree is a representative binary search tree, and the B+ tree is pervasively used in large systems. The selected implementation of the red-black tree is the C++ STL map because it is widely used in the industry and can reflect the performance of the modern red-black tree. The implementation of B+-tree is the C++ TLX btree\underline{ }map \cite{13}. TLX offers a collection of data structures and algorithms that are not included in STL, and it can represent the performance of a modern implementation of B+-tree.

In our real world, most usage scenarios are a combination from:
\begin{itemize}
\item data intensive insertion: Insert a large chunk of continuous data into the structure, e.g. writing the value of keys $0, 1, 2, \dots, n-1$ into the structure.
\item data intensive search: Search some continuous keys, e.g. searching keys $0, 1, 2, \dots, n-1$.
\item data intensive deletion: Delete some continuous data, e.g. deleting keys $0, 1, 2, \dots, n-1$.
\item Data sparse insertion: Insert the value of some random keys into the structure; they are sparse and irrelevant.
\item Data sparse search: Search the value of some random keys in the structure; they are sparse and irrelevant.
\item Data sparse deletion: Delete some data randomly one by one.
\item data intensive iteration: When the data set in the structure is a large continuous chunk, iterate all existing data. This operation can indicate the sequential operation performance, like range search.
\item Data sparse iteration: When the data set in the structure is sparse and irrelevant, iterate all existing data.
\end{itemize}

The speed performance test uses a 64-bit key and and 64-bit value, programmed with C++ language, running on an AMD Ryzen 7 7840Hw CPU.

\begin{figure}[htbp]
\centerline{\includegraphics[width=\linewidth]{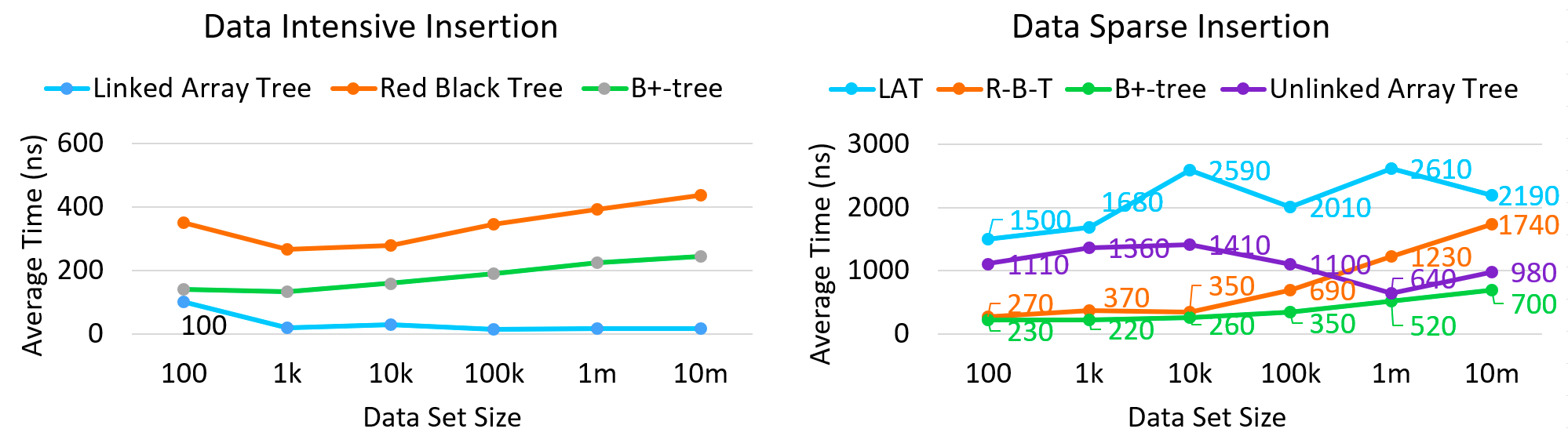}}
\caption{Insertion test for LAT, red-black tree, B+-tree.}

\end{figure}

\begin{figure}[htbp]
\centerline{\includegraphics[width=\linewidth]{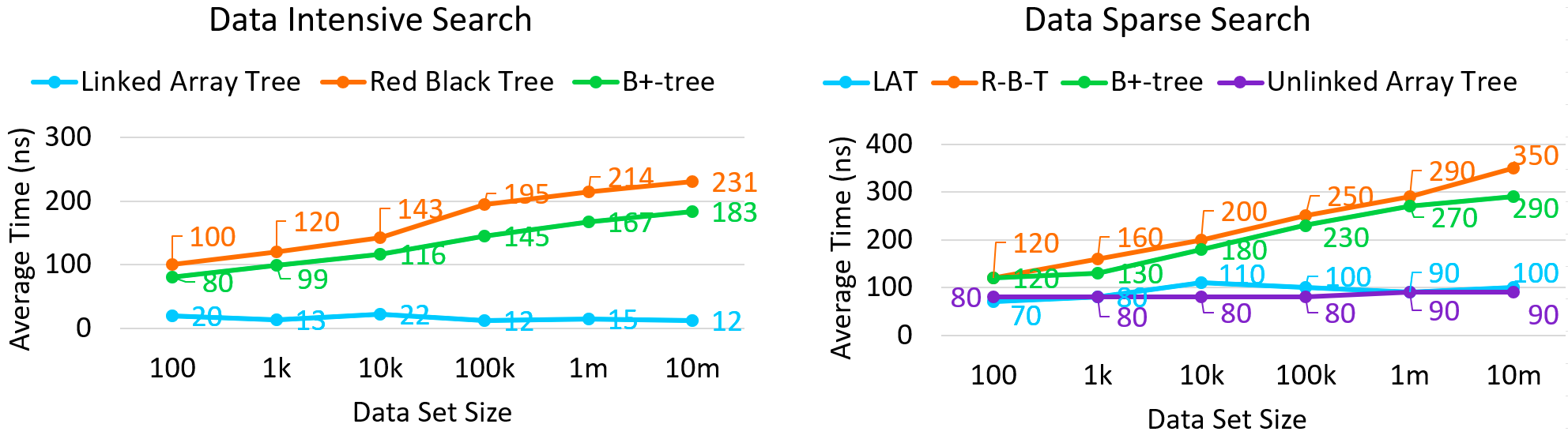}}
\caption{Search test for LAT, red-black tree, B+-tree, unlinked array tree.}
\end{figure}

\begin{figure}[htbp]
\centerline{\includegraphics[width=\linewidth]{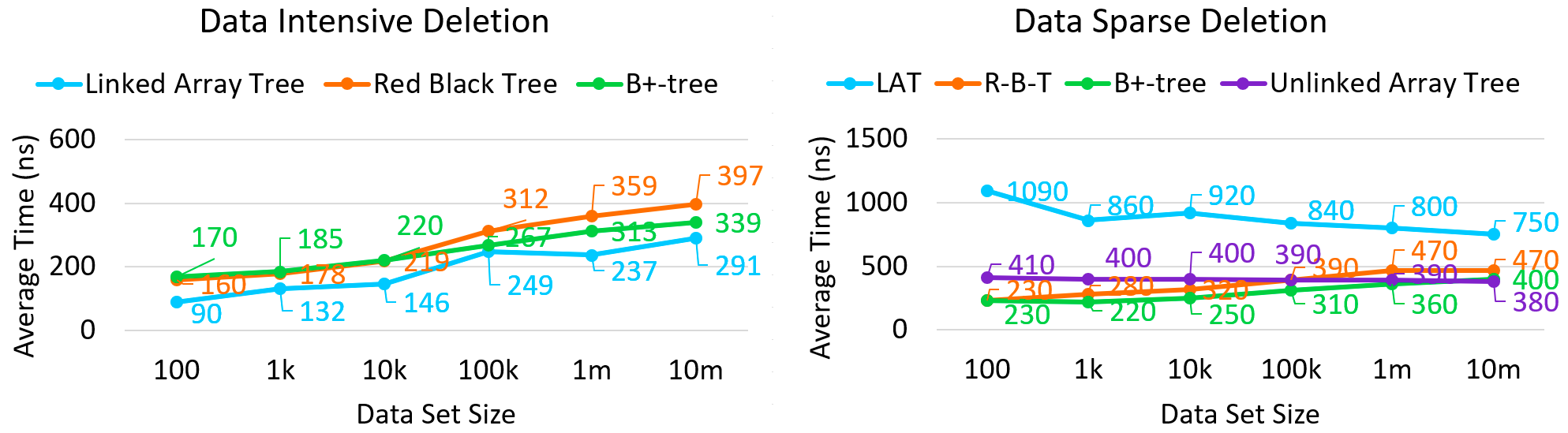}}
\caption{Deletion test for LAT, red-black tree, B+-tree, unlinked array tree.}
\end{figure}

\begin{figure}[htbp]
\centerline{\includegraphics[width=\linewidth]{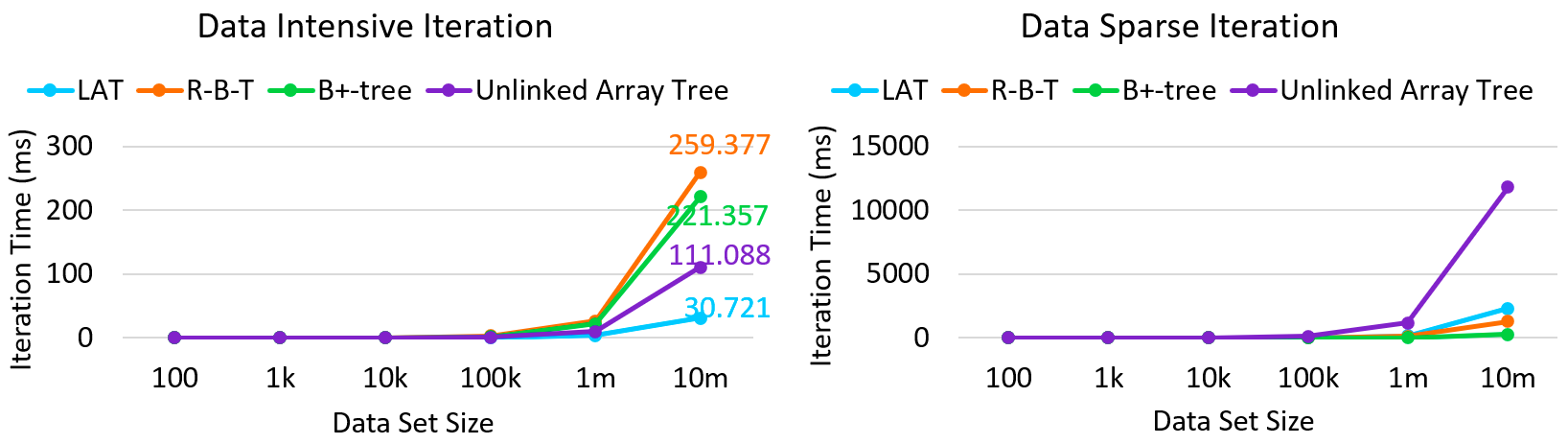}}
\caption{Iteration test for LAT, red-black tree, B+-tree, unlinked array tree.}
\end{figure}

In the data intensive test, the LAT provides an outstanding performance as it can insert or retrieval the data within 15ns in 100 million data. However, in the data sparse test, in the insertion and deletion, LAT takes a longer time compared with traditional algorithms. If we do not implement the pointers in data level, the insertion and deletion process do not need to find left array. The unlinked array tree halves the insertion and deletion time in the data sparse test, but this sacrifices the sequential operation performance as the trade-off.

The key strength of LAT is that it can guarantee a constant access time both in data intensive and data sparse test. Irrelevant to the stored data size, the performance of it can be expected. So, it is suited for the data intensive work in big data. For small, sparse data sets, the traditional search algorithms still have advantages.

\begin{figure}[htbp]
\centerline{\includegraphics[width=\linewidth]{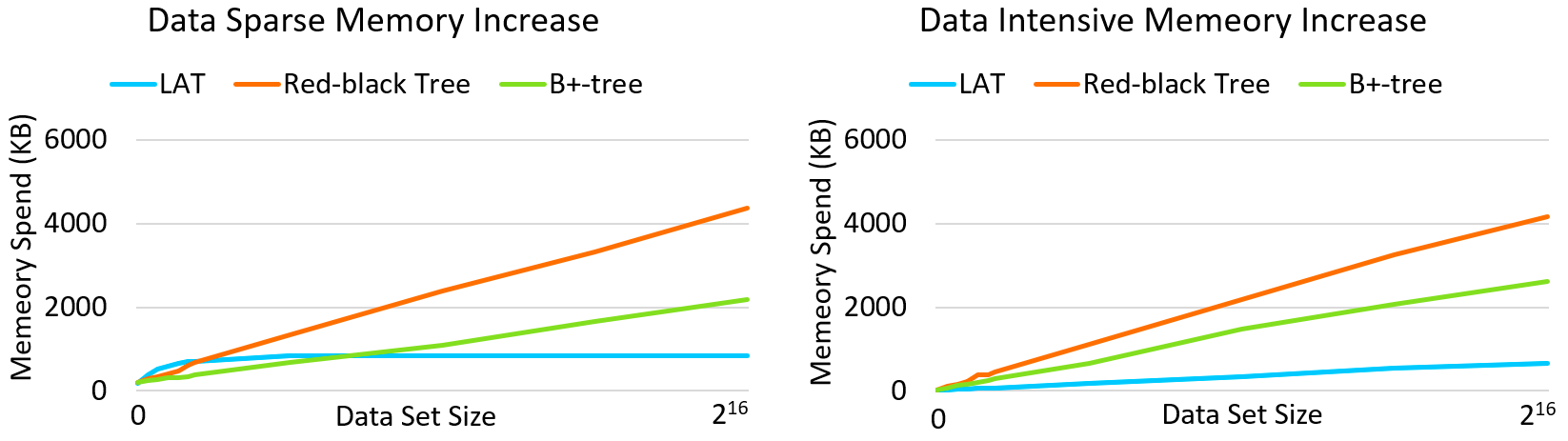}}
\caption{Memory increase test for LAT, red-black tree, B+-tree.}
\end{figure}

In memory increase test, the key length is 16 bits so that we can cover all keys and see the overall curve. In the data intensive test, memory consumption increasing curses of all algorithms are linear, but LAT is much less than the red-black tree and the B+-tree because it has fewer pointers and does not need to store keys. In the sparse test, the curves of the red-black tree and B+-tree are linear, while the LAT is steep at the beginning but flat later. If the data set takes a high proportion of max size, LAT requires much less memory than other algorithms; but if the data set is relatively small, it takes much more memory than others. This phenomenon is prominent when the key length is long. LAT could take tens of times more memory than other algorithms at the beginning. So if the data set is small and randomly distributed, LAT should be avoided.

For small random data sets, we recommended using a small $radix$ since the big $radix$ will waste many spaces in this scenario, while for a large chunk of continuous data, the only recommended $radix$ is 256 because modern memory stores data as bytes. To mitigate the memory waste problem in LAT in sparse usage, continuous optimization is required. For example, a LAT could be asymmetric, meaning the $radix$ in each level could be different. In Figure 9, the asymmetric unlinked array tree has 8 nodes in level 0, and it uses the highest 3 bits in the key to locate the the pointer in this level. It takes 30 nodes to store the data. If it is symmetric with $4$ as the radix, it will need 32 nodes.

\begin{figure}[htbp]
\centerline{\includegraphics[width=0.6\linewidth]{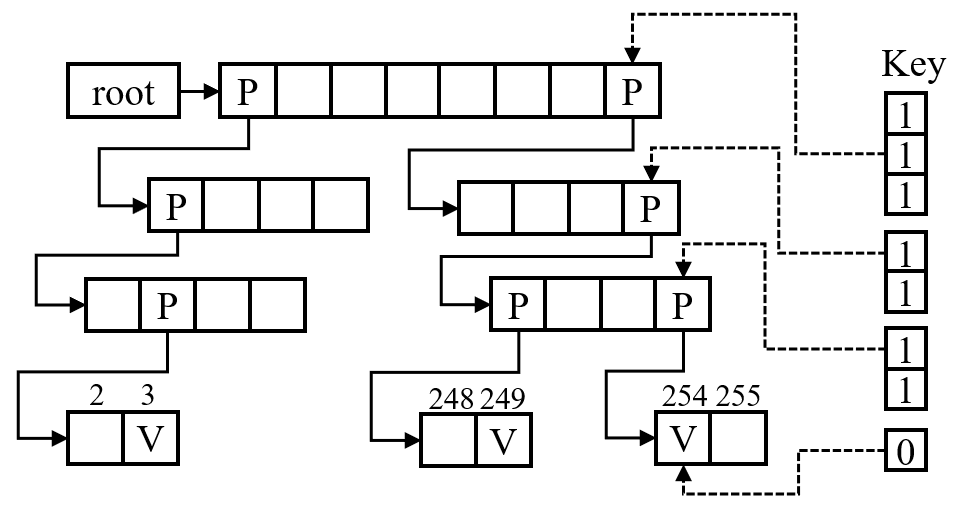}}
\caption{The asymmetric unlinked array tree has different radix in each level.}
\end{figure}

\section{Conclusion}
The search concept in LAT is like the paging mechanism in memory management, but this paper has a deeper analysis of the relationships between the bits in the key and the search path. Each key can be mapped to a unique path, and the structure is not necessary to be symmetric.

The linked array tree, which can achieve a constant time complexity, is not a replacement of the traditional search algorithm. Overall, LAT is suitable for a large, continuous chunk of data, like the memory management or disk management, or a large random data set that can take a high proportion of the max size. In these scenarios, LAT provides less storage access times, lower memory overhead, and the elimination of node movement. However, the small and sparse data set should avoid it because of the high memory overhead in the beginning.
Compared with the array, it does not allocate all space at once. It may be suitable for sparse matrix calculation. Through the $radix$ of LAT, it can offer a flexible way to control the trade-off between time and space.

The relevant code and test data now is available at: \href{https://github.com/SongpengLiu/LinkedArrayTree}{https://github.com/SongpengLiu/LinkedArrayTree}

% \bibliographystyle{plain}
% \bibliography{reference}
\end{document}